\documentclass[aps,pra,twocolumn,showpacs,showkeys,superscriptaddress,nobalancelastpage,nofootinbib,floatfix,longbibliography]{revtex4-2}

\usepackage{amsmath,amssymb,amsthm}
\usepackage{graphicx}
\usepackage{physics}
\usepackage{enumitem}
\usepackage{xspace}

\usepackage{orcidlink}

\usepackage{url}
\usepackage{hyperref}
\usepackage{color}
\definecolor{refcolor}{RGB}{160,35,0}
\definecolor{hrefcolor}{RGB}{0,35,190}
\hypersetup{
    colorlinks,
    citecolor=refcolor,
    filecolor=refcolor,
    linkcolor=hrefcolor,
    urlcolor=hrefcolor
}

\def\({\left(}
\def\){\right)}
\def\[{\left[}
\def\]{\right]}

\newcommand{\hilbert}{\mathcal{H}}
\newcommand{\pobs}[1]{#1}
\newcommand{\obs}[1]{\mathsf{\pobs{#1}}}
\newcommand{\oper}[1]{\widehat{\obs{#1}}}

\newcommand{\x}{\mathbf{x}}

\newcommand{\p}{\boldsymbol{p}}

\newcommand{\eg}{\textit{eg.}\xspace}
\newcommand{\etc}{\textit{etc}\xspace}
\newcommand{\etal}{\textit{et al.}\xspace}

\newtheorem{theorem}{Theorem}
\newtheorem{test}{Test}
\newtheorem{question}{Question}
\newenvironment{answer}{\begin{proof}[Answer]}{\end{proof}}

\theoremstyle{remark}

\newtheoremstyle{nospace} % name of the style
  {3pt}                   % Space above
  {3pt}                   % Space below
  {\itshape}              % Body font
  {}                      % Indent amount
  {\bfseries}             % Theorem head font
  {.}                     % Punctuation after theorem head
  {.5em}                  % Space after theorem head
  {\thmname{#1}\thmnumber{#2}\thmnote{ (#3)}} % <--- HEADER SPECIFICATION

\theoremstyle{nospace}
\newtheorem{QT}{QT}
\newcommand{\qtref}[1]{{\normalfont{QT\ref{#1}}}}

\newcommand{\HSF}{{\normalfont{HSF}}\xspace}%{\ref{def:HSF}\xspace}
\newcommand{\HSFiso}{{\normalfont{HSF$'$}}\xspace}%{\ref{def:HSF}\xspace}

\newtheorem{defCustom}{}
\makeatletter
\newcommand{\setdefCustomtag}[1]{% \setdefCustomtag{<tag>}
  \let\oldthedefCustom\thedefCustom% Store \thedefCustom
  \renewcommand{\thedefCustom}{{\normalfont\textbf{#1}}}% Redefine it to a fixed value
  \g@addto@macro\enddefCustom{% At \end{defCustom}, ...
    \global\let\thedefCustom\oldthedefCustom}% ...restore \thetheorem
  }
\makeatother

\allowdisplaybreaks

\begin{document}

\title{No change in Hilbert space fundamentalism}
\author{\orcidlink{0000-0002-2765-1562}\ Cristi Stoica}
%\email[]{cristi.stoica@theory.nipne.ro,holotronix@gmail.com}
\affiliation{Dept. Th. Physics, NIPNE-HH, Bucharest, Romania. \href{mailto:cristi.stoica@theory.nipne.ro}{cristi.stoica@theory.nipne.ro},  \href{mailto:holotronix@gmail.com}{holotronix@gmail.com}}

\begin{abstract}
Hilbert space fundamentalism (HSF) states that everything about the physical world is encoded in the Hamiltonian operator and the state vector (as a unit vector, not a wavefunction, which requires additional specification of a configuration space, a position basis, or the position observables). That all structures needed to describe reality, including subsystems, space, fields, emerge from these.

I show that HSF can't account for our observations that the physical world changes in time.
\end{abstract}

\maketitle

The quantum theory describing our physical world can be specified by the following:

\begin{QT}
\label{QT:HSF}
A Hilbert space $\hilbert$, a Hermitian operator $\oper{H}$ on $\hilbert$, and a unit vector $\ket{\psi(t)}\in\hilbert$ representing the total state. The time evolution of $\ket{\psi(t)}$ is generated by $\oper{H}$:
\begin{equation}
\label{eq:time-evol}
\ket{\psi(t)}=e^{-\frac{i}{\hbar}\oper{H}t}\ket{\psi(0)}.
\end{equation}
\end{QT}

\begin{QT}
\label{QT:OBS}
An association of physically observable properties to Hermitian operators on $\hilbert$.
These are of the form

\begin{tabular}{p{5cm} l l}
&&\\
  position~of~particle~$j$ & $\mapsto$ & $(\widehat{x}_j,\widehat{y}_j,\widehat{z}_j)$ \\
  momentum~of~particle~$j$ & $\mapsto$ & $(\widehat{p}_{x_j},\widehat{p}_{y_j}, \widehat{p}_{z_j})$ \\
  $\ldots$ &  & $\ldots$ \\
\end{tabular}

or similarly from quantum field operators.
\end{QT}

Knowing which properties correspond to each subsystem (\eg particle $j$) is sufficient to determine a decomposition of $\hilbert$ as a tensor product of subsystem spaces~\cite{ZanardiLidarLloyd2004QuantumTensorProductStructuresAreObservableInduced}.
\begin{equation}
\label{eq:hilbert-tensor}
\hilbert \cong \hilbert_1\otimes\hilbert_2\otimes\ldots,
\end{equation}

\begin{QT}
\label{QT:PROJ}
When measuring an observable $\oper{A}=\sum_a a \dyad{a}$ we obtain an eigenvalue $a$ of $\oper{A}$ with probability $\abs{\braket{a}{\psi}}^2$; (our description of) the system updates from $\ket{\psi}$ to $\ket{a}$.
\end{QT}

From what we know, \qtref{QT:HSF}--\ref{QT:PROJ} provide a complete description of the facts about our world that can be accessed empirically.
But can we get the same for less?
For example, Everett~\cite{Everett1957RelativeStateFormulationOfQuantumMechanics} proposed that \qtref{QT:PROJ} is explained by \qtref{QT:HSF} and \qtref{QT:OBS}. But it seems that both \qtref{QT:HSF} and \qtref{QT:OBS} are still needed~\cite{Vaidman2016AllIsPsi,Stoica2022SpaceThePreferredBasisCannotUniquelyEmergeFromTheQuantumStructure}, since not the bare state vector $\ket{\psi(t)}$ describes the world, but the wavefunction
\begin{equation}
\label{eq:wavefunction}
\psi(\x,t)=\braket{\x}{\psi(t)}
\end{equation}
where $\ket{\x}=\ket{\ldots,\widehat{x}_j,\widehat{y}_j,\widehat{z}_j,\ldots}$ is a common eigenvector of the position observables.
Without observables, $\ket{\psi}$ is just a unit vector like any other unit vector in $\hilbert$. And the Hamiltonian $\oper{H}$ is like any other operator $\oper{U}\oper{H}\oper{U}^\dagger$ obtained by a unitary transformation $\oper{U}$.

But still, can't we just derive \qtref{QT:OBS} from \qtref{QT:HSF}?
After all, when we get the Hilbert space, we get for free all operators.
What do we gain by naming them ``position'' or ``momentum''? Don't we get all we need just from their spectra and their relations with other operators?
The map from \qtref{QT:OBS} is useful for thinking, but does it play a fundamental role in a purely structural theory?

It seems we can access empirically only ``clicks'' in the detectors or in our sense organs, only quantum information, from which we can infer correlations and reconstruct the relations and structures of our world. All we can know and describe using math are relations, structure. But maybe that's all there is to know.
This motivates a new paradigm shift, the suggested realization that \qtref{QT:HSF} is all we need to describe the world~\cite{CarrollSingh2019MadDogEverettianism,Carroll2021RealityAsAVectorInHilbertSpace}.
\setdefCustomtag{HSF}
\begin{defCustom}[Hilbert space fundamentalism]
\label{def:HSF}
The triple $(\hilbert,\oper{H},\ket{\psi})$ encodes a complete unambiguous description of the physical world. \normalfont{Equivalently,}
\end{defCustom}

\setdefCustomtag{HSF$'$}
\begin{defCustom}[Equivalent formulation of \HSF]
\label{def:HSF-iso}
Two triples $(\hilbert,\oper{H},\ket{\psi})$ and $(\hilbert',\oper{H'},\ket{\psi'})$ describe one and the same physical reality if and only if they are isomorphic, in the sense that there is a unitary map $\oper{U}:\hilbert\to\hilbert'$ so that
\begin{equation}
\label{eq:HSF-iso}
\oper{H'}=\oper{U}\oper{H}\oper{U}^{-1} \text{ {\normalfont and} } \ket{\psi'}=\oper{U}\ket{\psi}.
\end{equation}
\end{defCustom}

If \HSF is correct, \qtref{QT:HSF} should be able to explain the world just as well as \qtref{QT:HSF}$+$\qtref{QT:OBS}. In particular, it should be able to describe how things change. So a basic test is:
\begin{test}
\label{test:HSF}
Can \HSF describe unambiguously how the world changes in time?
\end{test}

It may be more difficult for \qtref{QT:HSF} to do this without \qtref{QT:OBS}, but \emph{if} we can decode the observable map from \qtref{QT:OBS} from $(\hilbert,\oper{H},\ket{\psi})$, everything should be equally easy.
All we need is to explain how subsystems emerge~\cite{CotlerEtAl2019LocalityFromSpectrum}, how space emerges~\cite{CarrollSingh2019MadDogEverettianism,Carroll2021RealityAsAVectorInHilbertSpace}, how pointer states emerge~\cite{Zurek1998DecoherenceEinselectionAndTheExistentialInterpretation,LombardiEtAl2012TheProblemOfIdentifyingSystemEnvironmentDecoherence} and so on.
And if this reconstruction doesn't give us back standard quantum theory, but a new quantum gravity theory~\cite{CarrollSingh2019MadDogEverettianism,Carroll2021RealityAsAVectorInHilbertSpace}, even better.
But does \HSF pass Test~\ref{test:HSF}?

\begin{theorem}
\label{thm:HSF-not-QT}
\HSF can't describe unambiguously how the world changes in time.
\end{theorem}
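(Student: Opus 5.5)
The plan is to use \HSFiso to show that the triples $(\hilbert,\oper{H},\ket{\psi(t)})$ at different times $t$ are all isomorphic. According to \HSF they would then describe one and the same physical reality, so no change can be encoded. The isomorphism is supplied by the time evolution \eqref{eq:time-evol} itself.

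\textbf{Step 1.} Fix two times $t_1,t_2$ and set $\oper{U}:=e^{-\frac{i}{\hbar}\oper{H}(t_2-t_1)}$, a unitary map $\hilbert\to\hilbert$. Since $\oper{U}$ is a function of $\oper{H}$, it commutes with $\oper{H}$, so
\begin{equation*}
\oper{U}\oper{H}\oper{U}^{-1}=\oper{H}.
\end{equation*}
By \eqref{eq:time-evol} we also have $\oper{U}\ket{\psi(t_1)}=\ket{\psi(t_2)}$. Hence the condition \eqref{eq:HSF-iso} holds between $(\hilbert,\oper{H},\ket{\psi(t_1)})$ and $(\hilbert,\oper{H},\ket{\psi(t_2)})$. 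By \HSFiso, the two triples represent the same physical reality. Every structure that \HSF could extract from the triple, whether subsystems, space, pointer states or anything else, is defined invariantly under such isomorphisms. Each such structure at $t_2$ is therefore the image under $\oper{U}$ of the corresponding structure at $t_1$, and all invariant quantities built from it coincide. Nothing definable from the triple distinguishes the two instants.

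\textbf{Step 2.} Contrast this with \qtref{QT:HSF}$+$\qtref{QT:OBS}. There the observables, for example $\widehat{x}_j$, are held fixed and are not conjugated by $\oper{U}$. The wavefunction \eqref{eq:wavefunction}, or expectation values such as $\bra{\psi(t)}\widehat{x}_j\ket{\psi(t)}$, do vary with $t$ whenever $\ket{\psi(0)}$ is not an eigenstate of $\oper{H}$. Observed change is precisely such a variation of the state relative to fixed observables. Under \HSF this reference frame can only be reconstructed from the triple, and the reconstruction is carried along by $\oper{U}$. Change therefore cannot be expressed unambiguously: the description at $t_2$ is isomorphic to, hence indistinguishable from, the description at $t_1$. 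The theorem follows, since our observations show that the world does change.

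\textbf{Main obstacle.} The difficulty I expect is rebutting the objection that the emergent structures could be picked out from the whole trajectory $t\mapsto\ket{\psi(t)}$, with change read off as the trajectory moves relative to them. My answer is that any such selection must be built from $\oper{H}$ and the trajectory alone. Applying $\oper{U}$ maps the trajectory to itself, shifted in time, and leaves $\oper{H}$ unchanged. A rule that assigns observables to the history, and not to an instant together with extra data, therefore cannot single out a preferred moment.

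To make ``ambiguous'' concrete, I would also note the following. Choose any family of observables $\oper{A}$ that appears to evolve along the trajectory. The conjugated families $\oper{U}_s\oper{A}\oper{U}_s^{\dagger}$, with $\oper{U}_s=e^{-\frac{i}{\hbar}\oper{H}s}$, are equally admissible under \HSF. Among these, the stationary frame obtained by comoving with the state shows no change at all. \HSF offers nothing that selects among these frames, so its description of how the world changes is not unique.
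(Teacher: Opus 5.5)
Your proposal is correct and is essentially the paper's proof. You take the time-evolution unitary $e^{-\frac{i}{\hbar}\oper{H}(t_2-t_1)}$, which commutes with $\oper{H}$, as the map in \eqref{eq:HSF-iso}, so \HSFiso identifies the triples at different times, which contradicts observed change. Your further remarks on fixed observables in \qtref{QT:HSF}$+$\qtref{QT:OBS}, trajectory-based constructions and conjugated observables are not needed for the proof; they match the paper's answers to Questions~\ref{q:observables}--\ref{q:emerge}.
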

\begin{proof}
Taking as the unitary map $\oper{U}$ from~\eqref{eq:HSF-iso} the unitary operator $\oper{U}_t:=e^{-\frac{i}{\hbar}\oper{H}t}$, and using $\oper{U}_t\oper{H}=\oper{H}\oper{U}_t$, we obtain
\begin{equation}
\label{eq:HSF-auto-time}
\oper{H}=\oper{U}_t\oper{H}\oper{U}_t^{\dagger} \text{ and } \ket{\psi(t)}=\oper{U}_t\ket{\psi(0)}.
\end{equation}

Then, according to \HSFiso, the triples $(\hilbert,\oper{H},\ket{\psi(t)})$ and $(\hilbert,\oper{H},\ket{\psi(0)})$ describe one and the same physical reality.
On the other hand, the triples $(\hilbert,\oper{H},\ket{\psi(t)})$ and $(\hilbert,\oper{H},\ket{\psi(0)})$ are supposed to describe different physical realities, since the world change in time.
Therefore, \HSF can't describe changes in the world.
\end{proof}

The rest of this article addresses potential objections.

%------------------------------------------------------------%
\subsection*{Questions }

\begin{question}
\label{q:mwi}
Isn't \HSF only about Everett's theory?
\end{question}
\begin{answer}
\HSF was advocated in the context of Everett's theory in~\cite{CarrollSingh2019MadDogEverettianism,Carroll2021RealityAsAVectorInHilbertSpace}, but it has a broader scope. Theorem~\ref{thm:HSF-not-QT} is about whether \qtref{QT:OBS} is reducible to \qtref{QT:HSF}, independently of whether \qtref{QT:PROJ} is also reducible to them.
\end{answer}

\begin{question}
\label{q:evol-not-auto}
The operator $\oper{U}_t$ gives the system's time evolution, so time changes. Is it then allowed to use $\oper{U}_t$ as an automorphism of $(\hilbert,\oper{H},\ket{\psi(0)})$ at a fixed time $t=0$?
\end{question}
\begin{answer}
$\oper{U}_t$ gives the time evolution, but this doesn't exclude it from the automorphism group of $(\hilbert,\oper{H},\ket{\psi(0)})$ at $t=0$. As an automorphism, it shows that \HSF doesn't distinguish the world's states at different times, because ``isomorphic triples describe one and the same reality''.
\end{answer}

\begin{question}
\label{q:wavefunction}
Given that $\psi(0)$ and $\psi(t)$ are different wavefunctions, how can you say that $(\hilbert,\oper{H},\ket{\psi(0)})$ and $(\hilbert,\oper{H},\ket{\psi(t)})$ describe one and the same physical reality?
\end{question}
\begin{answer}
I don't say this, \HSFiso says it, thus refuting itself.

In \qtref{QT:HSF}$+$\qtref{QT:OBS} we can talk about the wavefunction $\psi(\x,t)$. But once we embrace \HSF, we get rid of \qtref{QT:OBS}, including the position observables defining the wavefunction $\psi(\x,t)=\braket{\x}{\psi(t)}$ or the position configuration space on which it propagates.
\HSF can't presuppose them, it has to derive them unambiguously from \qtref{QT:HSF}.
\end{answer}

\begin{question}
\label{q:unit-vector}
But aren't $\ket{\psi(t)}$ and $\ket{\psi(0)}$ different? Then why do you say that in \HSF $(\hilbert,\oper{H},\ket{\psi(0)})$ and $(\hilbert,\oper{H},\ket{\psi(t)})$ describe one and the same physical reality?
\end{question}
\begin{answer}
I don't say this, \HSFiso says it, thus refuting itself.

The vectors $\ket{\psi(0)}$ and $\ket{\psi(t)}$ can be distinct, but $(\hilbert,\oper{H},\ket{\psi(0)})$ and $(\hilbert,\oper{H},\ket{\psi(t)})$ are isomorphic, so \HSFiso says that they describe one and the same physical reality. So \HSF misses the world's changes in time.
\end{answer}

\begin{question}
\label{q:observables}
Can't we use observables to extract the full description of the world from $\ket{\psi}$?
Wouldn't this allow \HSF to distinguish the world's states at different times?
\end{question}
\begin{answer}
We can talk about using observables to extract facts about reality from $\ket{\psi}$ in \qtref{QT:HSF}$+$\qtref{QT:OBS}, but how can we do this in \HSF? When \HSF got rid of \qtref{QT:OBS}, it lost the map between observables and the physical properties they represent.
So in \HSF you can use operators to extract data from $\ket{\psi}$, but you can't know what physical properties in the world those data represent.
There are infinitely many unitary operators $\oper{U}$ that commute with $\oper{H}$, and so there are infinitely many position-like operators $\widehat{\x}':=\oper{U}\widehat{\x}\oper{U}^\dagger$, all with the same spectra as $\widehat{\x}$ and in the same relation with $\oper{H}$. But in general they give different answers if you use them to compute $\braket{\x'}{\psi}$. So, again, no unique wavefunction emerges from \HSF, and there are no unambiguous answers to pertinent questions like ``what result did the pointer show?'', or ``where is Alice?''.
\end{answer}

\begin{question}
\label{q:obs-fixed}
In \qtref{QT:HSF}$+$\qtref{QT:OBS}, observables like $\widehat{\x}$ or $\widehat{\p}$ and the decomposition into subsystems~\eqref{eq:hilbert-tensor} don't change in time (in the Schr\"odinger picture used here).
Didn't the proof of Theorem~\ref{thm:HSF-not-QT} assume that they have to change too, covariantly with $\ket{\psi(t)}$, and this is why you mistakenly think that \HSF can't describe changes in the world?
\end{question}
\begin{answer}
The whole point of \HSF is that it got rid of the physical meanings of observables and the decomposition into subsystems, to the effect that any triple unitarily equivalent with $(\hilbert,\oper{H},\ket{\psi(0)})$ describes one and the same state of the world. Then $(\hilbert,\oper{H},\ket{\psi(t)})$ describes one and the same state of the world as $(\hilbert,\oper{H},\ket{\psi(0)})$.

I know it feels that you can find such observables and fix them, and then everything works as in \qtref{QT:HSF}$+$\qtref{QT:OBS}.
But if you do this at $t=0$ and repeat the process to recover them at another time $t\neq 0$, if you want to avoid Theorem~\ref{thm:HSF-not-QT}, you'll either have to change the rules to get them ``unevolved'' at every time $t$, or to use a rule that gives multiple solutions, and pick the one that fits and use it at all times.
But then, you'll have to make additional choices not specified in \qtref{QT:HSF}, like we do in \qtref{QT:HSF}$+$\qtref{QT:OBS}.
\end{answer}

\begin{question}
\label{q:symmetry-breaking}
But the whole point of \HSF is that it got rid of the physical meanings of observables and the decomposition into subsystems, precisely because they can be uniquely reconstructed from the mathematical relations only.
So why wouldn't this work?
\end{question}
\begin{answer}
If $(\hilbert,\oper{H},\ket{\psi(0)})$ and $(\hilbert,\oper{H},\ket{\psi(t)})$ describe one and the same world, no construction done in both of them can give different empirical results for different times.
\end{answer}

\begin{question}
\label{q:emerge}
Did you consider that all structures and observables needed to complete the description given by $(\hilbert,\oper{H},\ket{\psi})$ can emerge from $(\hilbert,\oper{H},\ket{\psi})$? See~\cite{CotlerEtAl2019LocalityFromSpectrum,CarrollSingh2019MadDogEverettianism,Carroll2021RealityAsAVectorInHilbertSpace}.
\end{question}
\begin{answer}
I did, this was the starting point. In \cite{Stoica2022SpaceThePreferredBasisCannotUniquelyEmergeFromTheQuantumStructure} I gave a fully general proof that, if its construction is fully invariant as it should be in \HSF, such a preferred structure 
\begin{enumerate}
	\item 
either is ambiguously defined: you obtain many isomorphic structures, giving incompatible answers to questions about the world that in \qtref{QT:HSF}$+$\qtref{QT:OBS} have clear definite answers at each time,
	\item 
or emerges uniquely up to physically observable differences,  and then it's unable to distinguish the state of the world at different times.
\end{enumerate}

Since the proof from \cite{Stoica2022SpaceThePreferredBasisCannotUniquelyEmergeFromTheQuantumStructure} is quite long and complicated, let me answer this in the simpler setup from this paper.

Suppose that some preferred structure $\mathcal{S}$ that quantum theory needs emerges uniquely. It can be the tensor product structure (TPS), the position observables, the 3D space, the spacetime, a preferred pointer basis \etc. 
We care about these structures because we need them to extract data from $\ket{\psi}$. So let's consider a question $\mathcal{Q}$ about the world, whose answer depends on the structure $\mathcal{S}$ and $\ket{\psi}$. 
For example, entanglement depends on the TPS.
According to \HSFiso, the question should have the same answer in any two isomorphic triples $(\hilbert,\oper{H},\ket{\psi})$ and $(\hilbert',\oper{H'},\ket{\psi'})$.
In particular, it should have the same answer in $(\hilbert,\oper{H},\ket{\psi(0)})$ and $(\hilbert,\oper{H},\ket{\psi(t)})$, for any $t$.
But then, our emergent structure $\mathcal{S}$ can't help us get answers that are different at different times.
For example, the TPS has to be able to define unambiguously the entanglement entropies of subsystems, so it has to be unique. But if a unique TPS emerges only from $(\hilbert,\oper{H},\ket{\psi(t)})$ independently of $t$, as \HSFiso implies, the Hamiltonian must lack interactions and it can't change entanglement~\cite{Stoica2024DoesTheHamiltonianDetermineTheTPSAndThe3dSpace}. But in the real world it can.
In general, \HSF, supplemented with any structures that may emerge uniquely from $(\hilbert,\oper{H},\ket{\psi})$, can't describe changing things.

This is simply because, since $(\hilbert,\oper{H},\ket{\psi})$ gives an incomplete description of the world, it can't bootstrap itself into completion.
If it can't describe how things change, any construction you make without breaking its symmetry by hand also can't describe how things change.
\end{answer}

\begin{question}
\label{q:connection-past-work}
Is there a connection between the result from~\cite{Stoica2022SpaceThePreferredBasisCannotUniquelyEmergeFromTheQuantumStructure} which you just summarized in the answer to Question~\ref{q:emerge}, and Theorem~\ref{thm:HSF-not-QT} from this paper?
\end{question}
\begin{answer}
In~\cite{Stoica2022SpaceThePreferredBasisCannotUniquelyEmergeFromTheQuantumStructure} I show that any preferred structure hoped to emerge only from $(\hilbert,\oper{H},\ket{\psi})$ either can't exhibit physically observable differences between states at different times, or it is not physically unique, so it's ambiguous.
When the solution is not unique, there are solutions that answer questions about the world as it is at different times, but we don't know which is which, so they too can't be used to describe change. For example, a TPS as obtained in~\cite{CotlerEtAl2019LocalityFromSpectrum} is unique up to symmetries of the Hamiltonian, so it can't define the entanglement entropy unambiguously, since different TPSs give different answers, valid at different other times~\cite{Stoica2024DoesTheHamiltonianDetermineTheTPSAndThe3dSpace}. The same applies to the emergence of a pointer basis or of 3D space~\cite{CarrollSingh2019MadDogEverettianism,Carroll2021RealityAsAVectorInHilbertSpace}. To summarize,~\cite{Stoica2022SpaceThePreferredBasisCannotUniquelyEmergeFromTheQuantumStructure} proves \HSF's inability to detect changes, just like the present Theorem~\ref{thm:HSF-not-QT}, but less directly.
\end{answer}

\begin{question}
\label{q:trivial}
If \HSF so trivially refutes itself, why did you bother to write and publish a 68 page long paper~\cite{Stoica2022SpaceThePreferredBasisCannotUniquelyEmergeFromTheQuantumStructure} and sequels~\cite{Stoica2023PrinceAndPauperQuantumParadoxHilbertSpaceFundamentalism,Stoica2024DoesTheHamiltonianDetermineTheTPSAndThe3dSpace}, building an entire framework for \HSF, only to prove weaker and less evident results than the one in the present paper?
\end{question}
\begin{answer}
Maybe I was too caught up in the game myself.
\end{answer}

\begin{question}
\label{q:structuralism-physicalism}
Along with the Hilbert space $\hilbert$, we get for free all operators on $\hilbert$, so all observables are already in \qtref{QT:HSF}.
The map from \qtref{QT:OBS} only labels them, names them as positions, momenta \etc.
How can this naming, which is just a convention, turn the ambiguous unchanging $(\hilbert,\oper{H},\ket{\psi})$ into a changing world as in \qtref{QT:HSF}$+$\qtref{QT:OBS}?
\end{question}
\begin{answer}
This is a great question, and it is where things become interesting~\cite{Stoica2023AreObserversReducibleToStructures,Stoica2025StructuralRealismAndTheUnderdeterminationOfThePhysicalMeaning,Stoica2024ObservationAsPhysication}.
\end{answer}

\begin{question}
\label{q:survives}
Does Theorem~\ref{thm:HSF-not-QT} leave any hope for \HSF?
\end{question}
\begin{answer}
Theorem~\ref{thm:HSF-not-QT} says that \HSF can't describe things that can change, but it doesn't say that it can't describe things that don't change, facts about the form of the physical law.
Here there is room for progress.
For example, Cotler \etal~\cite{CotlerEtAl2019LocalityFromSpectrum} studied TPSs in which the Hamiltonian is local, in the sense that it contains only interactions between a small number of subsystems.
Their result was widely misunderstood as showing the emergence of a unique TPS from the Hamiltonian's spectrum alone, but it is in fact about the form of the Hamiltonian.
I suspect other similarly interesting results can be discovered.

So the part of \HSF dealing only with the physical law, which doesn't change, evades Theorem~\ref{thm:HSF-not-QT}.
\end{answer}

%------------------------------------------------------------%


\begin{thebibliography}{10}

\bibitem{ZanardiLidarLloyd2004QuantumTensorProductStructuresAreObservableInduced}
P.~Zanardi, D.A. Lidar, and S.~Lloyd.
\newblock Quantum tensor product structures are observable induced.
\newblock {\em Phys. Rev. Lett.}, 92(6):060402, 2004.

\bibitem{Everett1957RelativeStateFormulationOfQuantumMechanics}
H.~Everett.
\newblock ``{R}elative state'' formulation of quantum mechanics.
\newblock {\em Rev. Mod. Phys.}, 29(3):454--462, Jul 1957.

\bibitem{Vaidman2016AllIsPsi}
L.~Vaidman.
\newblock All is $\psi$.
\newblock {\em J. Phys. Conf. Ser.}, 701:012020, 2016.

\bibitem{Stoica2022SpaceThePreferredBasisCannotUniquelyEmergeFromTheQuantumStructure}
O.C. Stoica.
\newblock 3d-space and the preferred basis cannot uniquely emerge from the
  quantum structure.
\newblock {\em Adv. Theor. Math. Phys.}, 26(10):3895---3962, 2022.
\newblock \href{http://arxiv.org/abs/2102.08620}{arXiv:2102.08620}.

\bibitem{CarrollSingh2019MadDogEverettianism}
{Carroll \& Singh}.
\newblock Mad-dog {E}verettianism: {Q}uantum {M}echanics at its most minimal.
\newblock In A.~Aguirre, B.~Foster, and Z.~Merali, editors, {\em What is
  Fundamental?}, pages 95--104. Springer, 2019.

\bibitem{Carroll2021RealityAsAVectorInHilbertSpace}
S.M. Carroll.
\newblock Reality as a vector in {H}ilbert space.
\newblock In Valia Allori, editor, {\em Quantum mechanics and fundamentality},
  volume 460, pages 211--224. Springer Nature, 2022.

\bibitem{CotlerEtAl2019LocalityFromSpectrum}
J.S. Cotler, G.R. Penington, and D.H. Ranard.
\newblock Locality from the spectrum.
\newblock {\em Comm. Math. Phys.}, 368(3):1267--1296, 2019.

\bibitem{Zurek1998DecoherenceEinselectionAndTheExistentialInterpretation}
W.H. Zurek.
\newblock {Decoherence, Einselection, and the Existential Interpretation (The
  Rough Guide)}.
\newblock {\em Phil.\ Trans.\ Roy.\ Soc.\ London Ser. A}, A356:1793, 1998.
\newblock \href{http://arxiv.org/abs/quant-ph/9805065}{arXiv:quant-ph/9805065}.

\bibitem{LombardiEtAl2012TheProblemOfIdentifyingSystemEnvironmentDecoherence}
Olimpia Lombardi, S.~Fortin, and M.~Castagnino.
\newblock The problem of identifying the system and the environment in the
  phenomenon of decoherence.
\newblock In {\em EPSA Philosophy of Science: Amsterdam 2009}, pages 161--174.
  Springer, 2012.

\bibitem{Stoica2024DoesTheHamiltonianDetermineTheTPSAndThe3dSpace}
O.C. Stoica.
\newblock Does the {H}amiltonian determine the tensor product structure and the
  3d space?
\newblock 2024.
\newblock \href{http://arxiv.org/abs/2401.01793}{arXiv:2401.01793}.

\bibitem{Stoica2023PrinceAndPauperQuantumParadoxHilbertSpaceFundamentalism}
O.C. Stoica.
\newblock The prince and the pauper: {A} quantum paradox of {H}ilbert-space
  fundamentalism.
\newblock 2023.
\newblock \href{http://arxiv.org/abs/2310.15090}{arXiv:2310.15090}.

\bibitem{Stoica2023AreObserversReducibleToStructures}
O.C. Stoica.
\newblock Are observers reducible to structures?
\newblock 2023.
\newblock \href{https://arxiv.org/abs/2307.06783}{arXiv:2307.06783}.

\bibitem{Stoica2025StructuralRealismAndTheUnderdeterminationOfThePhysicalMeaning}
O.C. Stoica.
\newblock Structural realism and the underdetermination of the physical
  meaning.
\newblock {\em J. Phys.: Conf. Ser.}, 2948(1):012014, 2025.

\bibitem{Stoica2024ObservationAsPhysication}
O.C. Stoica.
\newblock Observation as physication. {A} single-world unitary no-conspiracy
  interpretation of quantum mechanics.
\newblock 2024.
\newblock \href{https://arxiv.org/abs/2412.09669}{arXiv:2412.09669}.

\end{thebibliography}
\end{document}